\tikzstyle{printersafe}=[snake=snake,segment amplitude=0 pt]
\newtheorem{theorem}{\em Theorem}
\newtheorem{definition}{\em Definition}
\newtheorem{corollary}{\em Corollary}
\journal{Sample Journal}
\begin{document}

\begin{frontmatter}

\title{Parameterized algorithms for Partial vertex covers in bipartite graphs}

\author[label1]{Vahan Mkrtchyan\corref{cor1}}
\address[label1]{Gran Sasso Science Institute, L'Aquila, Italy}

\cortext[cor1]{Corresponding author}

\ead{vahan.mkrtchyan@gssi.it}

\author[label5]{Garik Petrosyan}
\address[label5]{Department of Informatics and Applied Mathematics, Yerevan State University, Yerevan, Armenia}
\ead{garik.petrosyan.1@gmail.com}

\author[label6]{K. Subramani}
\address[label6]{West Virginia University, Morgantown, USA}
\ead{k.subramani@mail.wvu.edu}


\begin{abstract}
 In this paper, we discuss parameterized algorithms for variants of
 the partiall vertex cover problem. Recall that in the classical vertex
 cover problem (VC), we are given a graph $G = (V, E)$
 and a number $K$ and asked if we can cover the edges $e \in E$, using
 at most $K$ vertices from $V$. In the Partial vertex cover problem (PVC),
 in addition to the parameter $K$, we are given a second parameter $K'$
 and the question is whether we can cover at least $K'$ edges $e \in E$
 using at most $K$ vertices from $V$.  The weighted generalizations of the
 VC and PVC problems are called the Weighted vertex cover (WVC) and the
 Partial weighted vertex cover problem (WPVC)  respectively. In the WPCV problem,
 we are given two parameters $R$ and $L$, associated respectively with the
 vertex set $V$ and edge set $E$ of the graph ${\bf G}$. Additionally, we are
 given non-negative integral weight functions for the vertices and the edges.
 The goal then is to cover edges of total weight at least $L$, using vertices
 of total weight at most $R$. (In the WVC problem, the goal is  to cover
 all the edges with vertices whose total cost is at most $R$).
  Observe that the variants of VC mentioned
 here, viz., PVC, WVC and WPVC are all generalizations of VC and hence
 their  {\bf NP-completeness} follows immediately from the {\bf NP-completeness} of VC. One attack
 on {\bf NP-complete} problems is to devise algorithms that are polynomial, if
 certain selected selected parameters are bounded. Such algorithms, if they
 exist are called parameterized algorithms and if they run in time polynomial
 in the size of the input (but exponential time in the size of the parameter), the
 problem is said to be fixed-parameter tractable. This paper studies several
 variants of the PVC problem and establishes new results from the perspective
 of fixed parameter tractability and {\bf W[1]-hardness.} We also introduce a new problem
 called the Partial vertex cover with matching constraint and show that it is fixed-parameter
 tractable for a certain class of graphs.
\end{abstract}

\begin{keyword}
Partial Vertex cover \sep bipartite graph \sep parametrized algorithm
\end{keyword}

\end{frontmatter}



\section{Introduction}
   
  In this paper, we study several variants of the vertex cover (VC) problem, from
  the perspectives of parameterized algorithm design and parametric complexity.
  In particular, we consider the partial vertex cover problem, wherein the goal
  is to cover a certain threshold of edges (as opposed to all the edges) using
  the fewest number of vertices. We also look into weighted variants of this

 The rest of this paper is organized as follows: The problems studied in this paper are
 formally described in Section \ref{sop}. In Section \ref{motwork}, we discuss the motivation
 for our work and mention related approaches in the literature. Our main results
 are described in Section \ref{mainres}. A variant of the partial vertex cover problem
 with applications to computational social choice is detailed in Section \ref{matching}.
 We conclude in Section \ref{conc}, by summarizing our results and outlining avenues
 for future research.

\section{Statement of Problems}
\label{sop}

We focus on finite, undirected graphs that have no
loops or multiple edges. As usual, the degree of a vertex is the
number of edges of the graph that are incident to it. The maximum
degree of the graph $G$ is just the maximum of all degrees of vertices
of $G$. A graph $G=(V,E)$ is bipartite, if its vertex set can be
partitioned into two sets $V_1$ and $V_2$, so that each edge of $G$
joins a vertex from $V_1$ to one from $V_2$.

Given a graph $G= ( V, E)$, and a set ${ S \subset V}$ of vertices,
an edge $(i,j) \in { E}$ is \textit{covered} by ${ S}$ if $i \in { S}$
or $j \in { S}$. Let  $E(S)$ to be the set of edges of $G$ that are
covered with at least one vertex of $S$. The classical Vertex Cover
problem (VC) is defined as finding the smallest set $S$ of vertices of
the input graph $G$, so that $E(S)=E$. The vertex cover problem is a
well-known {\bf NP-complete} problem \cite{Kar72}.

 In this paper, we study the following variants of VC:
\begin{enumerate}
\item The Partial vertex cover problem (PVC) - In this problem -
\begin{definition}
Given an undirected graph ${ G=( V,E)}$,  and a vertex-cardinality parameter $K_1$ and an
edge-cardinality parameter $K_2$, is there a subset $V'$ of $V$, such that $|V'| \le K_1$ and
the number of edges of covered by $V'$ is at least $K_2$?
\end{definition}

\item The Weighted Partial Vertex Cover problem (WPVC) - 

\begin{definition}
\label{def:WPVCB} 
Given an undirected graph ${ G=( V,E)}$,  weight-functions $c: { V} \rightarrow { N}$ and ${p}: { E} \rightarrow { N}$
and a vertex-weight parameter $R$ and an edge-weight parameter $L$, is there a subset $S$ of $V$, such that $\sum_{e\in { E(S)}} p(e) \geq L$, and 
 $\sum_{v \in { S}} c(v) \le R$?
\end{definition}

\item The Partial vertex cover problem on bipartite graphs (PVCB) - This is the restriction of the partial vertex cover (PVC)
problem to bipartite graphs.

\item The Weighted Partial vertex cover problem on bipartite graphs (WPVCB) - This is the restriction of the weighted vertex cover (WVC) problem
to bipartite graphs.

\item The Partial vertex cover problem with matching constraint (PVCBM) - This is a variant of the PVCB problem, in which we are given
a third parameter $K_3$ and the goal is to find a vertex subset of size at most $K_1$, covering at least $K_2$ edges, such that the
edges covered include a matching of size at least $K_3$.
\end{enumerate}

The principal contributions of this paper are as follows:
\begin{enumerate}
\item Fixed-parameter tractability for a restricted version of the WPVCB problem.
\item ${\bf W}[1]$ hardness of the WPVCB problem with respect to a certain parameter.
\item Fixed parameter tractability of the Weighted partial vertex cover problem in bounded
degree graphs (not necessarily bipartite).
\item Fixed parameter tractability of the Weighted partial vertex cover problem with respect to $L$.
\item A parameterized algorithm for the matching variant of the PVCB problem.
\end{enumerate}

\section{Motivation and Related Work}

\label{motwork}
When $c\equiv 1$ and $p\equiv 1$, we get the well-known partial vertex
cover problem (PVC). PVC represents a natural theoretical
generalization of VC and is motivated by practical
applications. Flow-based risk-assessment models in computational
systems, for example, can be viewed as instances of PVC \cite{CGBS13}.

Although VC is polynomial-time solvable in bipartite graphs, the
Partial Vertex Cover problem on bipartite graphs is NP-hard. The
computational complexity of this problem has been open and recently
shown to be NP-hard \cite{Apoll, Joret, pvcbpaper, CS14}. Many
$2$-approximation algorithms for VC are known \cite{Vazirani}. There
is an approximation algorithm for the VC problem which has an
approximation factor of $2 - \theta(\frac{1}{\sqrt{logn}})$
\cite{Kar09}. This is the best known algorithm. The VC problem is also
known to be APX-complete \cite{PY91}. Moreover, it cannot be
approximated within a factor of $1.3606$ unless P = NP \cite{DS05},
and not within any constant factor smaller than $2$, unless the unique
games conjecture is false \cite{KR08}. Let us note that in
\cite{KPS11}, a $(\frac{4}{3}+\epsilon)$-approximation algorithm is
designed for WPVC for each $\epsilon >0$ when the input graph is
bipartite. This restriction is denoted by WPVCB.

All hardness results for the VC problem directly apply to the PVC
problem because the PVC problem is an extension of the VC
problem. Since 1990’s the PVC problem and the partial-cover variants
of similar graph problems have been extensively studied
\cite{BshB98,Bla03,KMR07,KLR08,KMRR06,M09}. In particular, there is an
$O(n \cdot \log n +m)$-time $2$-approximation algorithm based on the
primal-dual method \cite{M09}, as well as another combinatorial
2-approximation algorithm \cite{BFMR10}. Both of these algorithms are
for a more general soft-capacitated version of PVC. There are several
older 2-approximations resulting from different approaches
\cite{Bar01,BshB98,GKS04,Hoch98}. Let us also note that the WPVC
problem for trees (WPVCT) is studied in \cite{pvct}, where the authors
provide an FPTAS for it, and a polynomial time algorithm for the case
when vertices have no weights.

Another problem with close relationship to WPVC is the Budgeted
Maximum Coverage problem (BMC). In this problem one tries to find a
min-cost subset of vertices, such that the profit of covered edges is
maximized. In some sense, this problem can be viewed as a problem
``dual" to WPVC, and it can be shown that both problems are equivalent
from the perspective of exact solvability. The BMC problem for sets
(not necessarily graphs) admits a $(1 - \frac{1}{e})$-approximation
algorithm \cite{BMCPsets} but, special cases that beat this bound are
rare. The pipage rounding technique gives a
$\frac{3}{4}$-approximation algorithm for the BMC problem on graphs
\cite{AgeevSvirid} which is improved to $\frac{4}{5}$ for bipartite
graphs \cite{Apoll2}. Finally, let us note that in \cite{pvcbpaper,
  CS14}, an $8/9$-approximation algorithm for the problem is presented
when the input graph is bipartite and the vertices are unweighted. The
result is based on the linear-programming formulation of the problem,
and the constant $8/9$ matches the integrality gap of the linear
program.

In the present paper, we address these problems from the perspective
of fixed-parameter tractability (FPT). Recall that a combinatorial
problem $\Pi$ is said to be fixed-parameter tractable with respect to
a parameter $k$, if there is an algorithm for solving $\Pi$ exactly,
whose running time is bounded by $f(k)\cdot size^{O(1)}$. Here $f$ is
some (computable) function of $k$, and $size$ is the length of the
input. From the perspective of FPT, the PVC problem is in some sense
more difficult than the VC problem. For instance, the PVC problem is
W[1]-complete \cite{ParamBook15}, while the VC problem is FPT
\cite{Guo05,ParamBook15}. Related with this topic, let us note that in
\cite{AM11} the decision version of WPVCB is considered, where in a
bipartite graph one needs to check whether there is a subset of cost
at most $R$, whose coverage is at least $L$. The authors show that
this problem is FPT with respect to $R$, when the vertices and edges
of the bipartite graph are unweighted \cite{AM11}.

In this paper, by extending the methods of Amini et al., we show that
the decision version of WPVCB is FPT with respect to $R$, if the
vertices have cost one, however the edges may have weights. On the
negative side, the problem is $W[1]$-hard, even when edges have profit
one.  We complement this negative result by proving that for
bounded-degree graphs WPVC is FPT with respect to $R$. The same result
holds for the case of WPVCB when one is allowed to take only one
fractional vertex. We finish the paper by showing that WPVC is FPT
with respect to $L$. Terms and concepts that we do not define can be
found in \cite{ParamBook15}.

\section {Main Results}
\label{mainres}

In this section we present our results. Our goal is to investigate the
fixed-parameter tractability of the decision version of WPVCB
problem. It is formulated as follows:\\

{\bf WPVCBD:} Given a bipartite graph $B$, a cost function
$c:V(B)\rightarrow{N}$, a profit function $p: E(B)\rightarrow{N}$ and
positive integers $R$, $L$, the goal is to check whether there is a
set $S\subset V(B)$, such that $c(S)\leq R$ and $p(E(S))\geq L$.\\

When $c$ and $p$ are identically one, we get the PVCBD problem. When
$c$ is identically one, we get EPVCBD. Finally, when $p$ is
identically one, we get the VPVCBD problem. We will also use the same
scheme of notations when the input graph need not be bipartite. In
\cite{AM11}, PVCBD is considered and it is shown that the problem is
FPT with respect to $R$. Below we strengthen this result.

\begin{theorem}\label{thm:epvcbd} EPVCBD is FPT with respect to $R$.
\end{theorem}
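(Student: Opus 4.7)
The plan is to extend the branching argument of Amini and Mkrtchyan [AM11] for PVCBD to handle arbitrary edge profits. First, apply a trivial reduction: if any vertex $v$ has weighted degree $p(v) := \sum_{e \ni v} p(e) \ge L$, output YES with $S = \{v\}$; so we may assume $p(v) < L$ for every vertex. The key quantitative observation is that any candidate solution $S$ with $|S| \le R$ satisfies
\[
p(E(S)) \le \sum_{v \in S} p(v) \le R \cdot \max_{v} p(v),
\]
so if $\max_v p(v) < L/R$ we may output NO. Consequently, at every recursive call we can find a vertex $v^*$ with $p(v^*) \ge L/R$ on which to branch.

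Given such a $v^*$, branch into two cases. In the \emph{include} branch ($v^* \in S$), recursively solve EPVCBD on $B - v^*$ with parameters $(R-1,\, L - p(v^*))$; both strictly decrease and $L - p(v^*) \le L(R-1)/R$. In the \emph{exclude} branch ($v^* \notin S$), mark $v^*$ as forbidden but keep its incident edges in the graph, since they may still be covered through their opposite endpoints. The bipartite structure yields the inclusion-exclusion identity
\[
p(E(S)) = \sum_{a \in S_A} p(a) + \sum_{c \in S_C} \bigl(p(c) - p(c, S_A)\bigr),
\]
where $S = S_A \cup S_C$ and $p(c, S_A) := \sum_{a \in S_A} p(a, c)$. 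Since $p(c) \ge p(c, S_A)$, the summands on the right are non-negative; therefore, once $S_A$ is fixed, the optimal $S_C$ is obtained in polynomial time by taking the $R - |S_A|$ vertices $c \in C$ with the largest values of $p(c) - p(c, S_A)$. Thus it suffices to search for the correct $S_A \subseteq A$ of size at most $R$ (and symmetrically for $S_C \subseteq C$).

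The main obstacle I anticipate is bounding the depth of the exclude branch, since forbidding $v^*$ does not by itself decrease $R$. Following the template of [AM11], the plan is to argue that repeated exclusions either drive the maximum weighted degree over the non-forbidden vertices below $L/R$ (cutting off the branch by infeasibility), or to install a kernelization rule limiting by a function of $R$ the number of candidate high-profit vertices on each side that can plausibly enter an optimal $S_A$. The presence of arbitrary edge weights makes this bookkeeping more delicate than the unit-profit setting, because a vertex's contribution to covered profit is no longer tied to an integer degree; nevertheless, careful tracking of the residual target $L - p(\text{edges already covered})$ at each node of the search tree, combined with the greedy completion on $C$ described above, should deliver an algorithm with running time $f(R) \cdot (|V| + |E|)^{O(1)}$.
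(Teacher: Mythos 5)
Your opening reductions are sound and match the paper's starting point: the threshold $L/R$ on the weighted degree $p(v)=p(\partial(v))$, the observation that a solution of at most $R$ unit-cost vertices covering profit $\ge L$ must contain a vertex with $p(v)\ge L/R$, and the fact that including such a vertex lets you recurse on $(R-1,\,L-p(v^*))$. But the proof as written has a genuine gap exactly where you flag it: the exclude branch has no progress measure. Forbidding $v^*$ changes neither $R$ nor the weighted degree of any other vertex, so your first proposed fix (that repeated exclusions ``drive the maximum weighted degree below $L/R$'') simply does not happen --- you can exclude arbitrarily many high-weighted-degree vertices and the next one still has $p\ge L/R$. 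Your second proposed fix (a kernelization bounding the number of candidate high-profit vertices by a function of $R$) is the missing ingredient, but you never supply it, and the greedy completion of $S_C$ from a fixed $S_A$, while correct, does not help because $A$ itself is unbounded and you have no way to restrict the search for $S_A$ to $f(R)$ candidates.

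The idea that closes this gap is a pigeonhole argument on the candidate set $S=\{v: p(\partial(v))\ge L/R\}$ that eliminates the exclude branch entirely. If $|S|\ge 2R$, then at least $R$ of these vertices lie on the same side of the bipartition of the subgraph induced by $S$; those $R$ vertices form an independent set in $B$, so their covered profits do not overlap and any $R$ of them already cover profit at least $R\cdot(L/R)=L$ --- you may answer YES immediately. Otherwise $|S|<2R$, and since every feasible solution must intersect $S$, you branch only on \emph{which} vertex of $S$ to \emph{include} (at most $2R$ choices), recursing with $R-1$. The recursion depth is at most $R$, giving running time $O\bigl((2R)^{R}\cdot size^{O(1)}\bigr)$. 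Without this dichotomy (or an equivalent bound on the candidate set), your search tree is not bounded by any function of $R$, so the argument as proposed does not yet establish fixed-parameter tractability.
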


\begin{proof} Roughly speaking we obtain the result with the approach of \cite{AM11} by considering the weighted degree instead of usual degree. Below we present the technical details.

Assume that we have an instance $I$ of EPVCBD. For a vertex $v$ of
$B$, let $\partial(v)$ be the set of edges of $B$ incident with
$v$. Define the set $S$ of vertices of $B$ as follows:
\[
S=\left\{v\in V(B): p(\partial(v))\geq \frac{L}{R}\right\}.
\]

We consider two cases based on the size of $S$.\\

Case 1: $|S|\geq 2R$. Consider the subgraph $H$ of $B$ induced by
$S$. Since $B$ is bipartite, $H$ is bipartite, too. Let $(X, Y)$ be
the bipartition of $H$, and assume that $|X|\geq |Y|$. Since
$|X|+|Y|=|S|\geq 2R$, we have $|X|\geq R$. Take any $R$ vertices of
$X$. Observe that $X$ is an independent set in $B$, hence the coverage
of the $R$ vertices is at least $L$. This means that $I$ is a ``yes"
instance.\\

Case 2: $|S|< 2R$. Observe that any feasible solution to $I$ must
intersect $S$. Hence, we do recursive guessing, that is, we try each
vertex of $S$ one by one as a possible vertex of the feasible
solution.\\

In the Case 1, the algorithm will run in polynomial time, so the most
expensive case is Case 2. Since the number of vertices in a feasible
solution is at most $R$, we have that the depth of the recursion is at
most $R$. Hence the total running time of our algorithm is
$O((2R)^{R}\cdot size^{O(1)})$.
\end{proof}

Our next result shows that WPVCBD and VPVCBD are W[1]-hard. Our
reduction is from the Multi-colored CLIQUE problem
\cite{ParamBook15}. It is formulated as follows:
\medskip

{\bf Multi-colored CLIQUE:} Given a graph $G$, a positive integer $k$
and a partition $(V_1,...,V_k)$ of vertices of $G$, the goal is to
check whether $G$ contains a $k$-clique $Q$, such that $Q$ contains
exactly one vertex from each $V_j$ for $j=1,...,k$. \\

Multi-colored CLIQUE is a well-studied problem which is known to be
W[1]-hard. Observe that since an edge $e$ connecting two vertices from
$V_i$ $1\leq i\leq k$ does not lie in a feasible clique, without loss
of generality, we can assume that for $i=1,...,k$ $V_i$ is an
independent set of vertices.

\begin{theorem}
WPVCBD is W[1]-hard parameterized by $R$.
\end{theorem}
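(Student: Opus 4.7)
The plan is to reduce from \emph{Multi-colored CLIQUE} (just introduced in the excerpt as a W[1]-hard problem with parameter $k$) to \textbf{WPVCBD}, producing an instance whose cost budget $R$ is a function of $k$ only. Given $(G,k,V_1,\dots,V_k)$ with each $V_i$ independent, I will construct a bipartite graph $B$ together with $c:V(B)\to\mathbb{N}$, $p:E(B)\to\mathbb{N}$, $R=k+\binom{k}{2}$, and a threshold $L$, so that the instance is a ``yes'' instance iff $G$ contains a multi-colored $k$-clique.

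The construction is organized around two families of unit-cost selectors. On one side of $B$ I place a \emph{vertex-selector} $a_v$ for every $v\in V(G)$ (cost $1$, grouped by the color class $V_i$). On the other side I place an \emph{edge-selector} $b_e$ for every $e\in E(G)$ that joins two distinct color classes (cost $1$, grouped by the color-pair $(i,j)$). I also introduce, on each side, a small number of \emph{anchor} vertices $\alpha_i$ (one per color) and $\beta_{ij}$ (one per color-pair) with prohibitive cost $R+1$ so that no anchor is ever chosen. Each $a_v$ with $v\in V_i$ is joined to $\alpha_i$ by an edge of large profit $P$; each $b_e$ with $e\in E(V_i,V_j)$ is joined to $\beta_{ij}$ by an edge of large profit $P$. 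Finally, whenever $v$ is an endpoint of $e$ I add the connecting edge $(a_v,b_e)$ with profit $1$. I will set $L=(k+\binom{k}{2})P+k(k-1)$, i.e.\ exactly the profit obtained by selecting one vertex-selector per color, one edge-selector per color-pair, and having every chosen $b_e$ incident to both chosen $a$-vertices of its color-pair.

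The first direction is routine: a multi-colored $k$-clique $\{v_1,\dots,v_k\}$ with $v_i\in V_i$ yields the selection $S=\{a_{v_1},\dots,a_{v_k}\}\cup\{b_{e_{ij}}\}_{i<j}$, which has cost exactly $k+\binom{k}{2}=R$. Its covered profit is $kP$ (all anchor edges of the color side), plus $\binom{k}{2}P$ (all anchor edges of the pair side), plus $k(k-1)$ unit-profit edges $(a_{v_i},b_{e_{ij}})$, totaling exactly $L$. The converse is where the work is; this will be the main obstacle. I will argue, by exchange, that any feasible solution of profit $\geq L$ must select at most $R$ cost-$1$ vertices, and, because $P$ is chosen large enough that even one ``missed'' anchor cannot be compensated by the at most $\deg_G$-many unit-weight edges nor by overselection in a single gadget, the solution must cover every anchor edge exactly once. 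This will force it to select precisely one $a_v$ per color class $V_i$ and precisely one $b_e$ per pair $(i,j)$. The $k(k-1)$ additional unit-profit edges required by $L$ then force each chosen $b_e$ with $e\in E(V_i,V_j)$ to be adjacent in $B$ to both $a_{v_i}$ and $a_{v_j}$, i.e.\ the endpoints of $e$ are exactly the chosen vertices of colors $i$ and $j$. This is precisely the statement that $\{v_1,\dots,v_k\}$ spans a $k$-clique in $G$.

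The main technical point, and the one I would spend the most care on, is calibrating $P$ so that the following overselection is strictly suboptimal: picking two vertex-selectors from the same $V_i$ (gaining at most one extra anchor edge of weight $P$ plus a bounded number of unit edges) while omitting a vertex-selector in some other color $V_{i'}$ (losing $P$) is never profitable, and analogously for pair anchors. Choosing $P$ larger than any total unit-edge profit available in the instance (e.g.\ $P>|E(B)|$) suffices. Since $R=k+\binom{k}{2}$ depends only on $k$, this yields the desired parameterized reduction and completes the W[1]-hardness of WPVCBD parameterized by $R$.
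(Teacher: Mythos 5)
Your reduction has a fatal gap in the backward direction, and it lies exactly where you predicted the work would be: the unit-profit ``consistency'' edges $(a_v,b_e)$ do not enforce consistency in a \emph{coverage} problem. An edge $(a_v,b_e)$ is covered as soon as \emph{either} endpoint is selected, so selecting $b_e$ already covers both of its unit edges, for \emph{both} endpoints of $e$, regardless of which $a$-vertices you picked. Concretely, take \emph{any} selection of one $a_{v_i}$ per color and one $b_{e_{ij}}$ per color pair (no clique structure required). It covers $k$ vertex-anchor edges and $\binom{k}{2}$ pair-anchor edges for profit $(k+\binom{k}{2})P$, and the selected $b_{e_{ij}}$'s alone cover $2\binom{k}{2}=k(k-1)$ distinct unit edges. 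The total is already at least $L=(k+\binom{k}{2})P+k(k-1)$, so every instance of Multi-colored CLIQUE in which each color class is nonempty and each color pair carries at least one edge maps to a yes-instance. Calibrating $P$ cannot repair this; the problem is not overselection within a gadget but that the intended ``reward for agreement'' is paid out unconditionally. (A secondary symptom of the same issue: in your forward direction the unit-edge contribution is $\sum_i \deg_G(v_i)$, not $k(k-1)$, since each selected $a_{v_i}$ covers a unit edge to \emph{every} $b_e$ with $v_i\in e$.) A further warning sign is that all your selectable vertices have cost $1$; Theorem~\ref{thm:epvcbd} shows the unit-vertex-cost variant is FPT in $R$, so any correct reduction must exploit genuinely non-uniform vertex costs.

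The paper's proof enforces consistency by the opposite mechanism: it puts edges of $B$ between \emph{inconsistent} pairs of selectors (mismatched copies, or pairs corresponding to non-edges of $G$), pads every selector $x$ so that its total incident profit is a fixed function $2^{\chi(x)}(n+1)+5^{\chi(x)}$ of its color class alone, and sets $L$ equal to the sum of these per-class totals. Then the threshold is attainable only if no covered edge is double-counted, i.e.\ only if the selection is an independent set in $B$, which is exactly the clique condition. To force one selection per class it uses exponentially growing costs $c(x)=2^{\chi(x)}$ with budget $R=\sum_{i=1}^{2k}2^i$ and an inductive argument on the $5^i$ profit terms, rather than relying on unit costs and a counting budget. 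If you want to salvage your selector-based architecture, you would need to replace the ``reward covered edges for agreement'' device with one of these ``penalize double-counting for disagreement'' or cost-signature devices.
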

\begin{proof}
We show an FPT-reduction from \textsc{Multi-colored Clique}. Let
$G=(V,E)$ be an instance of this problem with vertices partitioned as
$V=V_1 \cup \ldots \cup V_k$.  We create a bipartite graph $B=(U' \cup
V' \cup Z, E')$ as follows. Let $U'$ and $V'$ be two copies of $V$,
and let $V=V'=\{v_1,\ldots,v_n\}$, $U'=\{u_1,\ldots,u_n\}$, where for
each $i \in [n]$, $u_i$ is a copy of $v_i$. Here as usual
$[n]=\{1,\ldots,n\}.$ For a vertex $v \in V$ let $\chi(v)$ be its
color, i.e., $\chi(v)=i$ if $v \in V_i$, and extend this to $U' \cup
V'$ so that $\chi(u_i)=\chi(v_i)+k$.  For a vertex $x \in U' \cup V'$,
let the cost of $x$ be $c(x)=2^{\chi(x)}$.  Add an edge $u_iv_j$ to
$B$ if either $\chi(u_i)=\chi(v_j)+k$ and $i \neq j$, or $\chi(u_i)
\neq \chi(v_j)+k$ and $v_iv_j \notin E(G)$. Give all these edges
profit $1$.  Observe that a selection of one vertex from every color
class of $U' \cup V'$ forms an independent set in $B$, if and only if
it corresponds to two copies of a $k$-clique in $G$.

Add two additional vertices $z_1$ and $z_2$, let $Z=\{z_1, z_2 \}$ and
give both the cost $2^{2k+1}$.  Finally, for every vertex $x \in U'
\cup V'$ add an edge $xz_2$ or $z_1x$ (as appropriate, to maintain
bipartiteness) and give these edges a profit value so that the
\emph{total} profit of all edges incident with $x$ equals
$2^{\chi(x)}(n+1)+5^{\chi(x)}$.  (This is clearly possible, since the
total profit of all previously created edges incident with $x$ is
bounded by $n$.)

Set the budgets of the instance as vertex budget
\[
R = \sum_{i=1}^{2k} 2^i = 2^{2k+1}-2
\]
and profit threshold
\[
L=\sum_{i=1}^{2k} (2^i(n+1)+5^i)=(n+1)R + (5/4)(5^{2k}-1).
\]
This finishes the instance description. It is clear that the
construction can be performed in polynomial time, and the budget $R$
is a function of $k$.  It therefore remains to show that $(B,R,L)$ is
a positive instance of WPVCBD if and only if $G$ has a multi-colored
clique.

\emph{From the multi-colored clique problem to the partial vertex
  cover problem.}  Let $X \subseteq V(G)$ be a multi-colored
$k$-clique, and let $S=\{u_i, v_i \mid u_i \in X\}$.  Since $S$
contains one vertex for every color class of $B$ its total cost equals
$R$, and since it induces an independent set in $B$ the total profit
of the edges covered equals $L$.

\emph{From the partial vertex cover problem to the multi-colored
  clique problem.}  Now for the more challenging part of the
argument. We need to argue that the costs and profits balance out so
that the only way to select vertices to a total profit of $L$ is to
select one vertex from every color class of $B$.  For this, first
observe that for a vertex of class $i \in [2k]$, the ratio of the
total profit of its incident edges to its cost is
\[
\frac{2^i(n+1)+5^i}{2^i} = (n+1)+(5/2)^i.
\]
Let $S$ be a partial vertex cover of profit at least $L$ and cost at
most $R$, and for each $i \in [2k]$ let $n_i$ be the number of
vertices of $S$ with color class $i$.  We can consider the
contributions to the total profit in two parts.

From the first part of the above expression, it is clear that every
selection of cost at most $R$ contributes a profit of at most
$(n+1)R$, regardless of the distribution $n_i$.  Therefore we focus on
the contribution of the second part of the formula, with target profit
$L-(n+1)R$.

Considering this second part, define $R_t=\sum_{i=1}^t 2^i$ and
$L_t=\sum_{i=1}^t 5^i$. We show by induction that for every $t \in
[2k]$, the largest possible contribution of a selection of cost at
most $R_t$ is $L_t$ (which of course is achieved by making one
seletion per color class).  For $t=1$ this is trivial. Therefore, by
induction, let $t>1$ and assume that the claim holds for every value
$t' < t$.  Let $n'_i$, $i \in [t]$, denote the number of vertices
selected in color class $i$, for a selection of total cost at most
$R_t$.  Then if $n_t=0$, the maximum possible profit is
\[
R_t \cdot (5/2)^{t-1} < 2^{t+1} (5/2)^{t-1} = 4 \cdot 5^{t-1} < 5^t.
\]
Therefore, the total profit from the selection $n_i'$ is less than
that from a single vertex of class $t$.  Therefore $n_t' \geq 1$. But
then the remaining budget is $R_t-2^t=R_{t-1}$, and by induction the
optimal selection has $n_i'=1$ for every $i \in [t]$, completing the
induction step.  Therefore we may assume $n_i=1$ for every $i \in
[2k]$ for our selection $S$.

But then, finally, we observe that a total profit of $L$ is possible
only if $S$ is an independent set, since otherwise the profit of some
edge will have been double-counted in the above calculations.
Therefore, $S$ contains precisely one vertex of each color class of
$B$ forming an independent set. In particular, if $v_i \in S$ is a
selection in color class $j \in [k]$ for some $i \in [n]$, then the
selection in color class $j+k$ must be $u_i$. Also, for every pair of
color classes $i, i' \in [k]$ the selections in classes $i$ and $i'+k$
are independent in $B$ and therefore the selections in classes $i$ and
$i'$ are neighbors in $G$.  Thus $X=S \cap V'$ is a multi-colored
clique in $G$, as required.
\end{proof}

\begin{corollary}
The problem is W[1]-hard also in the variant where all edge profits
are 1, i.e., the VPVCBD problem is W[1]-hard.
\end{corollary}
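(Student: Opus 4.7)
The plan is to reduce from the W[1]-hard instance built in the preceding theorem, but modify its construction so that every edge has profit exactly $1$, while keeping all vertex costs and the budget $R$ unchanged. In the original reduction the only edges carrying profit larger than $1$ are those joining a vertex $x \in U' \cup V'$ to the auxiliary vertices $z_1$ or $z_2$; these were assigned profits so that the total incident profit at $x$ equals $\pi(x) = 2^{\chi(x)}(n+1) + 5^{\chi(x)}$. I would replace each such high-profit edge by a bundle of profit-$1$ pendants: delete $z_1, z_2$, and for every $x \in U' \cup V'$ attach enough fresh degree-$1$ neighbors to $x$, each via a profit-$1$ edge, so that the total number of edges incident to $x$ becomes exactly $\pi(x)$.

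To preserve bipartiteness, pendants attached to a vertex of $U'$ go on the $V'$-side of the bipartition, and symmetrically for $V'$. To guarantee that pendants are never chosen, I would assign each pendant a prohibitive cost, say $R+1$; then no feasible solution (which must have total cost at most $R$) can include a pendant. Consequently every feasible $S$ is contained in $U' \cup V'$, and since pendants attached to distinct vertices are disjoint, the coverage of $S$ equals $\sum_{x \in S} \pi(x)$ minus the number of $U'V'$-edges both of whose endpoints lie in $S$ — these being the only edges that can possibly be double-counted. The budget $R = 2^{2k+1} - 2$ and the profit threshold $L = \sum_{i=1}^{2k}(2^i(n+1) + 5^i)$ remain exactly as before.

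From this point the analysis of the previous theorem transfers verbatim: the inductive argument comparing the contributions of selections of cost at most $R_t = \sum_{i \le t} 2^i$ against the profit increment $L_t = \sum_{i \le t} 5^i$ forces $n_i = 1$ for every color class $i \in [2k]$, and independence of the chosen set is then forced because any edge covered by two endpoints in $S$ would make reaching profit $L$ impossible. Hence $S \cap V'$ is a multi-colored $k$-clique in $G$, completing the reduction. The only thing that needs checking is that the vertices $z_1, z_2$ played no structural role in the previous argument beyond aggregating per-vertex profit $\pi(x)$ at each $x$; since the pendants reproduce exactly the same per-vertex profit bookkeeping and are themselves cost-inaccessible, this is immediate, and I do not expect any substantive obstacle — the argument is essentially a mechanical conversion of variable-profit edges into parallel unit-profit edges.
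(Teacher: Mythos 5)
Your proposal is correct and follows essentially the same route as the paper: both replace the high-profit edges to $z_1,z_2$ by the corresponding number of unit-profit pendant edges and make the pendants unselectable by giving them cost exceeding the budget $R$ (the paper reuses the cost $2^{2k+1}$ of the $z_i$, you use $R+1$ -- the same effect). The one point worth stating explicitly, which the paper does note, is that the number of pendants per vertex is bounded by $f(k)(n+1)$, so the modified construction is still an FPT reduction.
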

\begin{proof}
The only edges of weight more than 1 in the above reduction are the
edges connecting to the special vertices $z_i$, and the largest edge
weight used is bounded by a function $f(k)(n+1)$. Therefore, instead
of using edge weights we can in FPT time simply create the
corresponding number of pendant vertices for each vertex. These
pendants can be given the same weight as the vertices $z_i$.
\end{proof}

A class of graphs is said to be bounded-degree, if there is a constant
$C$, such that all graphs from the class have maximum degrees at most
$C$. It turns out that when the input graphs have bounded-degree and
need not be bipartite, we have the following result:

\begin{theorem}\label{thm:FPTBoundedDegree} WPVCD is fixed-parameter tractable with respect to $R$ for bounded-degree graphs.
\end{theorem}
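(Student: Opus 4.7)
The plan is to adapt the branching strategy of Theorem~\ref{thm:epvcbd} to bounded-degree, general-weight instances. Let $C=\Delta(G)$ and consider an instance $(G,c,p,R,L)$; after removing vertices of cost exceeding $R$ (they cannot lie in any feasible solution) I define the set of \emph{efficient} vertices
\[
T=\bigl\{v\in V(G) : p(\partial(v))\geq (L/R)\cdot c(v)\bigr\}.
\]
The same averaging as in Theorem~\ref{thm:epvcbd} shows that every feasible solution $S$ intersects $T$: if $S\cap T=\emptyset$ then $p(E(S))\leq\sum_{v\in S}p(\partial(v))<(L/R)\,c(S)\leq L$, contradicting $p(E(S))\geq L$.

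Next I split the analysis on the size of $T$, relative to a threshold $f(R,C)$ to be fixed. In the small case, $|T|\leq f(R,C)$, I branch on the choice of vertex $v\in T$ to be included in the partial cover, recursing on the instance in which $v$ has been added to $S$, the budget is reduced by $c(v)\geq 1$, and the threshold $L$ is reduced by the marginal contribution of $v$. Since each recursive call reduces the budget by at least one, the recursion depth is bounded by $R$, and the total running time is $f(R,C)^R\cdot\operatorname{poly}(n)$, which is FPT in $R$ with $C$ treated as a constant.

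In the large case, $|T|>f(R,C)$, I use that any set of $t$ vertices in a graph of maximum degree $C$ contains a greedy independent set of size at least $t/(C+1)$. Applying this to $T$ gives an independent subset $I\subseteq T$ of size at least $f(R,C)/(C+1)$. Every $v\in I$ satisfies $p(\partial(v))\geq (L/R)c(v)$ and $I$ is independent, so for any sub-collection $I'\subseteq I$ with $c(I')\leq R$ one has $p(E(I'))=\sum_{v\in I'}p(\partial(v))\geq (L/R)c(I')$. Choosing $f(R,C)$ of order $R^2(C+1)$ and exploiting the integrality of vertex costs in $[1,R]$, I would argue that a sub-collection $I'\subseteq I$ with $c(I')$ sufficiently close to $R$ always exists, giving $p(E(I'))\geq L$ and hence a YES certificate.

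The main obstacle I foresee is carrying out this selection of $I'$ cleanly when vertex costs are arbitrary positive integers: the naive choice of a size-$R$ independent subset could have total cost as large as $R^2$, so one must show the large case really forces a YES. A robust route is to sort $I$ by cost, add vertices greedily until the budget would be exceeded, and then use a pigeonhole on the cost classes present in $I$ to bound the final $c(I')$ within a constant factor of $R$; strengthening the density threshold in the definition of $T$ (e.g.\ to $2L/R$) then absorbs the constant-factor loss at the price of a slightly larger branching set. Alternatively, one may simply run a pseudo-polynomial knapsack DP on $I$ in time $O(|I|\cdot R)$ to determine whether some sub-collection meets both constraints, sidestepping the subset-sum difficulty entirely.
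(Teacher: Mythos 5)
Your averaging lemma (every feasible solution meets $T=\{v: p(\partial(v))\ge (L/R)c(v)\}$) is correct, and the small case is fine, but the large case has a genuine gap that your own proposed repairs do not close. With threshold $L/R$, an independent sub-collection $I'\subseteq I$ certifies YES only if $(L/R)c(I')\ge L$, i.e.\ $c(I')\ge R$; combined with the budget this forces $c(I')=R$ \emph{exactly}, a subset-sum condition that can simply be unsatisfiable. Concretely, take every vertex cost equal to $2$ and $R=3$, $L=3$, with every vertex having incident profit exactly $2$: then $T$ is the entire (arbitrarily large) vertex set, yet only one vertex fits in the budget and the instance is a NO-instance, so ``$|T|$ large'' does not force a YES. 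The two fixes you sketch both fail: raising the threshold to $2L/R$ lets you settle for $c(I')\ge R/2$, but then the averaging argument only yields $p(E(S))<2L$ for a solution avoiding $T$, so feasible solutions may miss $T$ entirely and the small-case branching becomes incomplete; and the knapsack DP on $I$ only searches subsets of $I$, so its failure to find a certificate does not let you answer NO. The tension is structural: the intersection lemma wants the threshold low, the large-case certificate wants it high, and they meet only at the measure-zero event $c(I')=R$.

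The paper's proof avoids this dichotomy altogether. It picks, for each cost class $i\in\{1,\dots,R\}$, one representative $v_i$ of maximum incident profit, sets $M=\{v_1,\dots,v_R\}$ and $N=N(M)$, and shows by an exchange argument that some feasible solution meets $M\cup N$: if $S$ avoids $M\cup N$, replace any $v\in S$ by $v_{c(v)}$; since $S$ contains no neighbor of $v_{c(v)}$, all of $v_{c(v)}$'s incident profit is newly covered and is at least $p(\partial(v))$, so feasibility is preserved. Bounded degree enters only to bound $|M\cup N|\le (d+1)R$, giving a branching set of FPT size with no case distinction. If you want to keep your density-based framework, you would need to add something playing the role of this exchange step; as written, the large case cannot be decided.
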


\begin{proof}
Let $I$ be an instance of the WPVCD problem, where $G=(V,E)$ is a
bounded-degree graph, $c: V \rightarrow N$, $p:E \rightarrow N$ are
cost and profit functions, and $L$, $R$ are constants. Assume that for
every $v \in V$, we have $d(v) \le d$. For $i=1,2,\ldots,R$, let
$M_i(G) = \{v : c(v) = i\}$ (we disregard the vertices of cost greater
than $R$). Choose a vertex $v_i\in M_i(G)$ which has the largest
coverage, and let $M$ be the set comprised of chosen vertices
$v_i$. The set of vertices of $G$ which have a neighbor from $M$, we
denote by $N$. Consider the set $M \cup N$. It is obvious that $|M
\cup N |\le (d+1) R$, where $d$ is the bound for the degree.

Let us show that if $I$ is a ``yes" instance, then we can construct a
feasible set, which intersects $M \cup N$. Indeed, let $S$ be a
feasible set, which does not have a vertex from $M \cup N$. Then we
can replace any vertex $v \in S$ with $v_{c(v)}$ to get a new set
$S'$. Since there is no vertex of $S$, which is a neighbor of
$v_{c(v)}$, we have that the profit of $S'$ does not
decrease. Moreover, as $c(v) = c(v_{c(v)})$, $S'$ is also
feasible. Observe that $S'$ intersects $M \cup N$.
 
Now we complete the proof by recursively guessing on $M \cup N$. Since
the number of vertices in a feasible set is at most $R$, we have that
the depth of the recursion is at most $R$, hence the total running
time of the algorithm is $((d+1)R)^{R} size^{O(1)}$.
\end{proof}

\begin{corollary} WPVCD is fixed-parameter tractable with respect to $R$ for graphs of maximum degree three, in particular, for cubic graphs.
\end{corollary}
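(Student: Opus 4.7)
The plan is to derive this directly from Theorem \ref{thm:FPTBoundedDegree} by specializing the degree bound. Since graphs of maximum degree three form a bounded-degree class with $C=3$ (and cubic graphs are a subclass where every vertex has degree exactly three), the hypothesis of Theorem \ref{thm:FPTBoundedDegree} is satisfied with $d=3$.

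First, I would observe that the instance $I = (G, c, p, R, L)$ with $G$ of maximum degree at most three satisfies the premise of Theorem \ref{thm:FPTBoundedDegree}. Then I would simply apply the theorem, inheriting its algorithm verbatim: form the set $M$ by selecting, for each cost value $i \in \{1,\dots,R\}$, a vertex $v_i \in M_i(G)$ of largest coverage; form $N$ as the set of neighbors of $M$; and recursively guess a vertex of $M \cup N$ to include in the solution. The bound $|M \cup N| \leq (d+1)R$ from the proof of Theorem \ref{thm:FPTBoundedDegree} becomes $|M \cup N| \leq 4R$ in our setting.

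Plugging $d=3$ into the running time $((d+1)R)^{R}\cdot \text{size}^{O(1)}$ from Theorem \ref{thm:FPTBoundedDegree} yields a running time of $(4R)^{R}\cdot \text{size}^{O(1)}$, which is of the form $f(R)\cdot \text{size}^{O(1)}$ and therefore witnesses fixed-parameter tractability with respect to $R$. The cubic case follows as an immediate subcase.

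There is essentially no obstacle here: the corollary is a pure specialization of the preceding theorem, so the only content is to note that $d=3$ is a legitimate choice of constant degree bound and to record the resulting running time. No additional combinatorial argument is required.
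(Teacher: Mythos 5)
Your proof is correct and matches the paper's intent exactly: the corollary is stated without proof precisely because it is an immediate specialization of Theorem \ref{thm:FPTBoundedDegree} with $d=3$, giving the running time $(4R)^{R}\cdot \mathrm{size}^{O(1)}$ as you note. Nothing further is needed.
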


Below we consider a version of the WPVCBD problem denoted by WPVCBFD
when one is allowed to take at most one vertex fractionally. When a
vertex $v$ is taken to an extent of $\alpha$, ($0 \leq \alpha \leq
1$), we assume that it contributes to the cost of the cover by
$\alpha\cdot c(v)$. The profits of covered edges are defined as
follows: if an edge $e$ is covered solely by $v$ then its profit is
$\alpha\cdot p(e)$, otherwise, it is $p(e)$. If a vertex $v$ is taken
fractionally, we say that $v$ is a fractional vertex.

\begin{theorem}\label{lemma:FractionalGroup} WPVCBFD is fixed-parameter tractable with respect to $R$.
\end{theorem}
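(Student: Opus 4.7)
My plan is to mirror the proof of Theorem~\ref{thm:epvcbd}, replacing absolute weighted degree by the profit-to-cost ratio and exploiting the single allowed fractional vertex to meet the budget exactly in the ``heavy'' case. I define
\[
S=\left\{v\in V(B):\ \frac{p(\partial(v))}{c(v)}\geq \frac{L}{R}\right\}
\]
and split on the total cost of $S$.

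When $c(S)\geq 2R$, I construct a feasible solution directly. The induced subgraph $B[S]$ is bipartite with some bipartition $(X,Y)$, so without loss of generality $c(X)\geq R$, and $X$ is an independent set in $B$. I add vertices of $X$ integrally in arbitrary order until the remaining budget is smaller than the cost of the next vertex, and then use the allowed fractional vertex with the unique fraction that makes the total cost exactly $R$. Since $X$ is independent no edge is double-counted, and because each $v\in X\subseteq S$ satisfies $p(\partial(v))\geq (L/R)\,c(v)$, the total covered profit is at least $(L/R)\cdot R=L$, so the instance is a yes-instance. When $c(S)<2R$, then $|S|<2R$ because the vertex costs are positive integers; I claim any feasible solution $(I,f,\alpha)$ must touch $S$, that is, either $I\cap S\neq\emptyset$ or $f\in S$, because otherwise every selected vertex $u$ would satisfy $p(\partial(u))<(L/R)c(u)$ and the total profit would be strictly less than $(L/R)(c(I)+\alpha c(f))\leq L$.

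This reduces the problem to a bounded branching step. For each of the fewer-than-$2R$ vertices $v\in S$ I explore two alternatives: (a) $v$ is integrally in the solution, and I recurse on $(B-v,\,R-c(v),\,L-p(\partial(v)))$ with the fractional slot still available; and (b) $v$ is the fractional vertex, so the slot is used up, $I\cap S=\emptyset$, and I must find an integral $I\subseteq V(B)\setminus S$ together with the optimal $\alpha=\min\{1,(R-c(I))/c(v)\}$. Each integral step decreases $R$ by at least $1$, so the recursion depth is at most $R$ with branching factor at most $2R$, giving a running time of $(2R)^{R}\cdot size^{O(1)}$. The main obstacle I expect is the bookkeeping in alternative (b): an edge $e\in\partial(v)$ contributes either $\alpha p(e)$ or $p(e)$ depending on whether its other endpoint lies in $I$, coupling the profit nonlinearly with both $\alpha$ and the choice of $I$. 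To handle this I would rewrite each such edge's profit as $\alpha p(e)+(1-\alpha)p(e)\,\mathbf{1}[e\in E(I)]$, absorb the $\alpha$ term as a fixed additive bonus once $v$ is fixed as fractional, and treat the $(1-\alpha)$ term as an augmentation of the edge weights, reducing alternative (b) to a residual WPVCBFD-type instance on $V(B)\setminus\{v\}$ that itself yields to the same heavy-vertex recursion.
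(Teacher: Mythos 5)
Your heavy case is fine: when $c(S)\geq 2R$ the independence of $X$ prevents double-counting, and topping up the budget to exactly $R$ with the one permitted fractional vertex gives profit at least $(L/R)\cdot R=L$. The light case, however, contains a genuine gap exactly at the point you flag, and your proposed fix does not close it. In alternative (b), once $v$ is designated as the fractional vertex, the slot is consumed and the residual problem of choosing the integral set $I$ is an \emph{integral} weighted partial cover instance with arbitrary vertex costs. This does not ``yield to the same heavy-vertex recursion'': the heavy case of that recursion relies essentially on being able to fill the budget to exactly $R$ via a fractional top-up, and without it a heavy independent side $X$ only guarantees profit at least $(L/R)\cdot c(X')$ for whatever subset $X'$ you can fit under the budget, which can fall short of $L$ (e.g.\ if every vertex of $X$ has cost $\lceil R/2\rceil+1$). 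Indeed, the paper's own Theorem~2 shows that the integral problem WPVCBD is W[1]-hard in $R$, so one should not expect the residual problem in branch (b) to be solvable by the same scheme. A second, smaller issue is that your ``fixed additive bonus'' $\alpha\,p(\partial(v))$ is not fixed: $\alpha=\min\{1,(R-c(I))/c(v)\}$ depends on the yet-to-be-chosen $I$, so the rewriting $\alpha p(e)+(1-\alpha)p(e)\mathbf{1}[e\in E(I)]$ does not decouple into a constant plus a reweighted instance. (One could branch further on the value of $c(I)\in\{0,\dots,R\}$ to pin down $\alpha$, but that still leaves the hard integral residual problem above.)

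For contrast, the paper avoids this entirely by a different route: it replaces each vertex $v$ by a section of $c(v)$ unit-cost copies, with copy-edges of profit $p(vu)/(c(v)c(u))$, reducing WPVCBFD to the unit-vertex-cost problem EPVCBD (Theorem~\ref{thm:epvcbd}), and then converts a solution back by taking $v$ to the extent $\alpha_v$ equal to the fraction of its section selected, finishing with an exchange argument that consolidates all fractional vertices into one. The unit-cost reduction is precisely what makes the heavy case unconditionally resolvable (any $R$ vertices of the heavy independent side can be taken), which is the step your direct branching cannot reproduce once the fractional slot is spent. If you want to salvage your approach, you need either a separate argument for branch (b) or a reduction of the whole instance to unit costs up front, as the paper does.
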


\begin{proof} Let $I$ be an instance of the WPVCBFD problem, where $B=(V,E)$ is a bipartite graph, $c:V\rightarrow N$, $p:E \rightarrow N$ are cost and profit functions, and $L$, $R$ are constants. For any $v \in V$ let

$$S_B(v) = \{v_i : i=1,2,...,c(v)\}.$$

$S_B(v)$ is called the section corresponding to $v$. We create an
  instance $I'$ of the EPVCBD problem with rational edge profits as
  follows: the graph is $B'=(V',E')$, where
$$V' = \bigcup \limits_{v \in V} S_B(v),$$
$$E' = \{v_iu_j: \makebox{ } v_i \in S_B(v),\makebox{ } u_j \in
  S_B(u)\makebox{ } \text{and} \makebox{ } vu \in E\},$$ and the
  profit function is
$$p(v_iu_j) = \frac{p(vu)}{c(v)c(u)}.$$ The new problem can be turned
  to an instance of the EPVCBD problem with natural edge profits by
  multiplying all edge profits with the least common multiplier of all
  denominators of edge profits. Clearly, this does not change the
  hardness of the problem. By Theorem \ref{thm:epvcbd}, we know that
  EPVCBD is fixed-parameter tractable with respect to $R$. Hence $I'$
  can be solved in time $f(R)\cdot size^{O(1)}$.

Assume that we have a feasible set $S'$ in $I'$. We construct a
feasible set $S$ of $I$, so that it contains at most one fractional
vertex. Let $\alpha_v$ be defined as follows:

\[\alpha_v = \frac{|\{v_i : v_i\in S' \}|}{c(v)},\]
and take every vertex $v$ to an extent of $\alpha_v$. It is obvious
that we have a solution $S$ for $I$, however there can be many
vertices which are fractional. Recall that we should allow only one
such vertex.

Now, we are going to modify $S$. Let $v$ and $u$ be fractional
vertices, and assume that both $v_i \in S_B(v)$ and $u_j \in S_B(u)$
do not belong to $S'$. Without loss of generality, assume that $v$ has
higher coverage. Then, delete one vertex from $S'$ that lies in the
section of $u$ and add one vertex to $S'$ that belongs to the section
of $v$. We do this for every such $v$ and $u$. After this only one
vertex can be fractional.

It is obvious that we do only polynomially many steps after solving
$I'$. It means that our problem is fixed-parameter tractable with
respect to $R$.
\end{proof}


Exercise 5.11 from \cite{ParamBook15} implies that PVCD is
fixed-parameter tractable with respect to $L$. Below, we strengthen
the statement of this exercise by showing that the WPVCD problem can
be parameterized with respect to $L$.

\begin{theorem}\label{thm:FPTL} WPVCD is fixed-parameter tractable with respect to $L$.
\end{theorem}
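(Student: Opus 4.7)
The plan is to prove Theorem~\ref{thm:FPTL} via a bounded-search-tree algorithm. I would first establish a structural lemma: any minimal (inclusion-wise) feasible solution $S^*$ satisfies $|S^*|\le L$. For each $v\in S^*$, let $P_v\subseteq E(S^*)$ denote the set of \emph{private} edges of $v$ in $S^*$, i.e., edges of $E(S^*)$ whose other endpoint lies in $V\setminus S^*$ (so that $v$ is the unique vertex of $S^*$ covering them). Minimality of $S^*$ gives $p(E(S^*\setminus\{v\}))<L$, hence $p(P_v)>p(E(S^*))-L$, and by the integrality of $p$, $p(P_v)\ge p(E(S^*))-L+1$. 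Since the $P_v$'s are pairwise disjoint subsets of $E(S^*)$, summing over $v$ yields $|S^*|\,(p(E(S^*))-L+1)\le p(E(S^*))$, whence $|S^*|\le L$.

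Next I would design a recursive branching algorithm that searches for such an $S^*$. Starting from the empty partial solution $S$, at each node: if $L-p(E(S))\le 0$, return YES; otherwise, find an edge $e=uv$ of maximum profit with both endpoints in $V\setminus S$ and branch into three subproblems: (a) $u\in S^*$: add $u$ to $S$ and recurse on $(G-u,\,R-c(u),\,L-p(\partial(u)))$; (b) $v\in S^*$: symmetric; (c) $u,v\notin S^*$: delete $e$ and recurse on $(G-e,\,R,\,L)$. In branches (a) and (b), the residual parameter strictly decreases by at least $p(e)\ge 1$, so any root-to-leaf path contains at most $L$ such branches.

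The main obstacle is bounding the number of (c)-branches, which do not decrease $L$ directly. The key idea is that, because $e$ is chosen to have maximum profit and is removed in (c), the current maximum edge profit is non-increasing along a sequence of (c)-steps, and whenever $p(E)-p(e)<L$ branch (c) is infeasible and can be pruned. Combined with a preprocessing rule that immediately returns YES whenever some vertex $v$ with $c(v)\le R$ satisfies $p(\partial(v))\ge L$, one can use a compound measure — for instance the lexicographic pair $(L,\,p_{\max})$ together with the number of maximum-profit edges — to bound the depth of the search tree by a function of $L$. With branching factor at most three, this yields an overall running time of $3^{O(L)}\cdot\mathrm{size}^{O(1)}$, establishing fixed-parameter tractability with respect to $L$.
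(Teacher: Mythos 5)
Your structural lemma is correct: the private-edge count does give $|S^*|\le L$ for any inclusion-minimal feasible solution, since $p(E(S^*))\ge L$ implies $p(E(S^*))/(p(E(S^*))-L+1)\le L$. The three-way branching is also sound and exhaustive. The genuine gap is the running-time analysis of branch (c). Along a root-to-leaf path the number of type-(a)/(b) branches is at most $L$, but nothing bounds the number of type-(c) branches by a function of $L$, and your proposed compound measure does not repair this. Take unit profits, $p\equiv 1$: then $p_{\max}=1$ throughout, \emph{every} edge is a maximum-profit edge, and the pruning rule $p(E)-p(e)<L$ essentially never fires, so a single path can contain $\Theta(m)$ consecutive (c)-steps. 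The ``number of maximum-profit edges'' component of your measure starts at $m$, which is not a function of $L$, so the search-tree depth is only bounded by $m+L$; counting leaves by the positions of the at most $L$ type-(a)/(b) choices gives roughly $\binom{m+L}{L}2^L$ leaves, i.e.\ an $n^{O(L)}$ (XP) bound rather than $f(L)\cdot \mathrm{size}^{O(1)}$. Some additional idea is needed to restrict each branching step to $f(L)$ candidates.

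The paper supplies exactly such a localization. After the same preprocessing you mention (answer YES if some vertex $v$ has $p(\partial(v))\ge L$), every vertex satisfies $p(\partial(v))\le L-1$ and hence has degree at most $L-1$. For each profit value $i\in\{1,\dots,L-1\}$ the paper fixes a minimum-cost vertex $v_i$ with $p(\partial(v_i))=i$; letting $M$ be these representatives and $N$ their neighbourhood, $|M\cup N|<L^2$, and an exchange argument (a solution disjoint from $M\cup N$ can have any of its vertices $v$ replaced by $v_{p(v)}$ without increasing cost or decreasing coverage) shows that some feasible solution meets $M\cup N$. Recursive guessing over $M\cup N$ then has depth less than $2L$, because any $2L$ non-isolated chosen vertices already cover profit at least $L$; this yields $L^{O(L)}\cdot\mathrm{size}^{O(1)}$. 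Your bound $|S^*|\le L$ could serve in place of the paper's $2L$ depth bound, but you would still need something like the representative set $M\cup N$ (or a color-coding/random-separation argument) to shrink the candidate set at each level from all of $V$ or $E$ down to a set of size depending only on $L$.
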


\begin{proof}
Let $I$ be an instance of the WPVCD problem, where $G=(V,E)$ is a
graph, $c:V\rightarrow N$, $p:E \rightarrow N$ are cost and profit
functions, $L$ and $R$ are constants. We can assume that no vertex of
$G$ is isolated. Moreover, without loss of generality, we can assume
that for any vertex $v$, we have $c(v)\leq R$.

For every vertex $v$ we denote $p(v)=p(\partial(v)) = \sum_{e:v\in
  e}p(e)$ the total profit of edges incident to $v$. We can assume
that $p(v) \leq L-1$ for all vertices, as otherwise we will have a
feasible solution comprised of one vertex and, as a result, $I$ is a
``yes" instance. This, in particular, means that $d(v)\leq L-1$. For
$i=1,\ldots,L-1$ let $v_i$ be the vertex which has profit $i$ and for
any other vertex $u$, which has profit $i$, we have $c(v_i) \le
c(u)$. Let $M$ be the set of those vertices $v_i$. The set of vertices
of $G$ which have a neighbor from $M$ is denoted by $N$. It is obvious
that $|M \cup N |\le (L-1)+(L-1)^2<L^2$.

Let us show that if $I$ is a ``yes" instance and $S$ is a feasible set
in $I$, then we can construct a feasible set that intersects $M \cup
N$. Indeed, assume that $S$ does not contain any vertex from $M \cup
N$. Then we can replace any vertex $v \in S$ by the vertex
$v_{p(v)}$. Since there is no vertex in $S$ which is a neighbor of
$v_{p(v)}$, it follows that the total vertex cost has not increased,
and, as $p(v) = p(v_{p(v)})$, $S'$ is also feasible.

Now we complete the proof by recursively guessing on $M \cup N$. We
now show that the depth of the recursion is less than $2L$. For the
sake of contradiction, assume that during the recursive guessing, the
algorithm has considered the vertices $z_1,\ldots, z_{2L}$. Let
$Z=\{z_1,\ldots, z_{2L}\}$. Since the algorithm has considered these
vertices, we have that $c(Z)\leq R$. Then for the profit of edges
covered by $Z$, we will have the following bound:
\[
p(E(Z))\geq |E(Z)|\geq \frac{d(z_1)+\ldots+d(z_{2L})}{2}\geq
\frac{2L}{2}=L.
\]
Thus, $Z$ is a feasible set, hence $I$ is a ``yes" instance. This
means that there is no need to consider $2L$ or more vertices during
the recursive guessing. Hence the depth of the recursion is less than
$2L$. Since $|M \cup N |\le L^2$, we have that the running time of our
algorithm is bounded by $(L^2)^{2L} \cdot size^{O(1)}=L^{4L} \cdot
size^{O(1)}$.
\end{proof}

\section{The Matching problem}
\label{matching}
We now consider a variant of the PVCB problem. In this variant, we are
given a bipartite graph $G$ and three integers $k_1$, $k_2$ and $k_3$,
and the goal is to check whether there is a subset of $k_1$ vertices,
that covers at least $k_2$ edges, such that the covered edges contain
a matching of size at least $k_3$. This variant is called PVCBM.

Clearly, this problem is {\bf NP-hard}, as when $k_3=0$ it results into the
PVCB. Since PVCB is FPT with respect to $k_1$, it would be interesting
to parameterize this new version of the problem with respect to
$k_1$. Observe that we can assume that $k_3\leq k_1$ otherwise the
problem is a trivial NO-instance.

\begin{theorem}
 PVCBM is fixed-parameter tractable with respect to the parameters
 $k_1$, $k_2$ and $k_3$.
 \end{theorem}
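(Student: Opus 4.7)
I would combine polynomial-time preprocessing with a modification of the branching algorithm of Theorem~\ref{thm:epvcbd}. First, via Hopcroft--Karp, check in polynomial time whether $G$ admits any matching of size $k_3$; if not, return NO. Also assume $k_3 \leq k_1$, as the problem observes, since otherwise the instance is trivially NO.

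Next, apply the branching of Theorem~\ref{thm:epvcbd} specialized to unit vertex and edge weights. Let $H = \{v \in V(G) : d(v) \geq k_2 / k_1\}$; every feasible $S$ satisfies $S \cap H \neq \emptyset$, since otherwise $|E(S)| < k_2$. When $|H| < 2 k_1$, branch over $H$: for each $v \in H$, include $v$ in the partial solution $S_0$, decrement $k_1$ by $1$, subtract from $k_2$ the number of newly covered edges, and recurse. At each leaf of the recursion, check in polynomial time (again via Hopcroft--Karp) whether the constructed $S$ satisfies both $|E(S)| \geq k_2$ (the original value) and $\nu(E(S)) \geq k_3$, where $\nu$ denotes the maximum matching size in the specified edge set; if some leaf succeeds, output YES. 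The recursion depth is at most $k_1$ and the per-level branching factor is at most $2 k_1$, so the search tree has at most $(2 k_1)^{k_1}$ leaves, each processed in polynomial time.

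The main obstacle is the case $|H| \geq 2 k_1$, in which Theorem~\ref{thm:epvcbd} shortcircuits to YES for PVCB without constructing an explicit $S$; for PVCBM we must additionally satisfy the matching constraint. Taking WLOG $|H \cap X| \geq k_1$ after relabelling, I would compute in polynomial time the maximum matching $\mu$ in the induced bipartite subgraph $G[H \cap X, Y]$. If $|\mu| \geq k_3$, take the $k_3$ matching edges' $X$-endpoints together with $k_1 - k_3$ arbitrary further vertices of $H \cap X$ as $S$; this set is independent in $G$, covers $\geq k_2$ edges by the degree bound $d(v) \geq k_2 / k_1$, and its edge neighborhood $E(S)$ already contains $\mu$. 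The residual subcase $|\mu| < k_3$ is the principal technical hurdle: here the high-degree side is structurally deficient for matchings, and I would resolve it by re-running the branching phase with a strengthened threshold $\max\{k_2 / k_1, k_3\}$ in place of $k_2 / k_1$, which forces every ``new-$H$'' vertex to have enough neighbors to participate in a matching of size $k_3$ via Hall's theorem. Making this refinement fully precise while keeping the combined search tree bounded by some $f(k_1, k_2, k_3)$ is where the argument will require the most care.
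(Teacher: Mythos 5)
Your proposal has a genuine gap, and you have correctly located it yourself: the case $|H|\geq 2k_1$ with $|\mu|<k_3$. The ``strengthened threshold'' fix does not work as described: replacing the threshold $k_2/k_1$ by $\max\{k_2/k_1,k_3\}$ destroys the invariant that every feasible solution intersects $H$, since a feasible PVCBM solution must contain a vertex of degree at least $k_2/k_1$ but need not contain any vertex of degree at least $k_3$ (the required matching of size $k_3$ can be realized by low-degree vertices), so branching over the redefined $H$ is no longer exhaustive. Moreover, in the stuck subcase the answer can still be YES via vertices of $Y$ or of $X\setminus H$, which your case analysis never reaches.

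The idea you are missing is K\"onig's theorem, which is what the paper's proof is built on. In a bipartite graph any edge set $F$ satisfies $\nu(F)=\tau(F)$, and adding a single edge to $F$ increases $\tau(F)$ by at most one. Hence, once you know that some set of at most $k_1$ vertices covers at least $k_2$ edges (one call to the FPT algorithm for PVCB) and that $\nu(G)\geq k_3$ (the Hopcroft--Karp check you already perform), you can grow the covered edge set one edge at a time until its vertex cover number reaches exactly $k_3\leq k_1$; the resulting edge set has at least $k_2$ edges, is coverable by $k_3\leq k_1$ vertices, and contains a matching of size $k_3$ by K\"onig. In other words, for bipartite graphs the matching constraint comes essentially for free: given $k_3\leq k_1$, the instance is a YES-instance if and only if $\mathrm{PVCB}(k_1,k_2)$ is a YES-instance and $\nu(G)\geq k_3$. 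This collapses your entire case analysis --- including the subcase where you are stuck, where the correct output is simply YES --- and no modification of the branching of Theorem~\ref{thm:epvcbd} is needed at all.
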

 
 \begin{proof}
Let $PVCB(A,B)$ be the algorithm that in $FPT(A)$-time checks whether
there is a subset of $A$ vertices that covers at least $B$ edges of
the input bipartite graph. Now, assume that the graph $G$ and the
parameters $k_1$, $k_2$ and $k_3$ are given in the matching
problem. First, we run $PVCB(k_1,k_2)$. If there is no such subgraph,
then the answer to the matching problem is also negative. So we can
assume that $PVCB(k_1,k_2)$ returns such a subgraph. Next, by trying
$R=0,1,...,k_1$ we can find the smallest $R$ for which $PVCB(R,k_2)$
is a yes-instance.

Let $H$ be the edge-induced subgraph induced on these $\geq k_2$
edges. As usual, let $\nu(G)$ be the size of the largest matching in
$G$, and let $\tau(G)$ be the size of the smallest vertex cover in
$G$. By the classical K\"{o}nig theorem we have $\nu(G)=\tau(G)$ for
any bipartite graph $G$.

Observe that we can assume that $R<k_3\leq k_1$. To see this, observe
that $R$ represents the number of vertices required to cover all the
edges in $H$. In other words, it is a vertex cover of $H$. Thus, $R=
\nu(H)$ and since $H$ is a bipartite subgraph of $G$, $R$ is also the
size of a maximum matching in $H$. Thus, if $R \ge k_3$, then the
edges in $H$, which number at least $k_2$ can be covered by $R \le
k_1$ vertices and a matching of size $R \ge k_3$ is contained in $H$.

 Also, observe that if $\tau(G)<k_3$, then we have trivial
 NO-instance, as $G$ contains no matching of size $k_3$. Thus, we can
 assume that $\nu(G)=\tau(G)\geq k_3$.

Since $\tau(H)=R<k_3\leq \tau(G)$, we have that $E(H)\neq E(G)$. Thus,
there is as an edge $e$ lying outside $H$. Add $e$ to $H$. If
$\tau(H)$ has increased by adding $e$, define $R:=R+1$, otherwise let
$R$ be the same. Repeat this process of adding edges outside
$H$. Since $\tau(H)=R<k_3\leq \tau(G)$, at some point we will arrive
into $H$ such that $R=\tau(H)=k_3\leq k_1$. Observe that $H$ can be
covered with at most $k_1$ vertices, it has at least $k_2$ edges and
it contains a matching of size $k_3$. Thus, the problem is a
YES-instance.

Finally, let us observe that the running-time of this algorithm is FPT
in $k_1$. We need at most $k_1$ calls of $PVCB(k_1, k_2)$. Since the
latter is FPT with respect to $k_1$, we have ther result.
\end{proof}

\section{Conclusion}
\label{conc}
 In this paper, we studied the partial vertex cover problem from the perspectives of
 parameterized tractability and {\bf W[1]-hardness}. Although our primary focus
 was on bipartite graphs, we obtained new results for the general case as well.
 Our main contributions include showing that a restricted version of the WPVCB problem
 is fixed-parameter tractable and that this problem is {\bf W[1]-hard}, with respect
 to the edge-weight parameter. We also showed that the WPVC problem is fixed-parameter
 tractable in bounded graphs. Finally, we introduced a new variant of the partial vertex
 cover problem called PVCBM and showed that it is fixed-parameter tractable.

\section*{Acknowledgements}
We are grateful to Magnus Wahlstr\"{o}m for his proof of Theorem $3$.



\bibliographystyle{elsarticle-num}



\begin{thebibliography}{}

\bibitem{AM11} O. Amini, F. V. Fomin, S. Saurabh, Implicit branching and parameterized partial cover problems, J. Comp. Sys. Sciences (77), pp. 1159--1171, 2011.

\bibitem{AgeevSvirid} A. A. Ageev, M. Sviridenko, Approximation algorithms for maximum coverage and max cut with given size of parts, IPCO, pp. 17--30, 1999.

\bibitem{Apoll} N. Apollonio, B. Simeone, The maximum vertex coverage problem on bipartite graphs, Discrete Appl. Math. (165), pp. 37--48, 2014.

\bibitem{Apoll2} N. Apollonio, B. Simeone, Improved approximation of maximum vertex coverage problem on bipartite graphs, SIAM J. Discrete Math. 28(3), pp. 1137--1151, 2014.

\bibitem{Bar01} R. Bar-Yehuda, Using homogeneous weights for approximating the partial cover problem, J. Algorithms 39(2), pp. 137--144, 2001.

\bibitem{CGBS13}
C.~C.~Bilgin, B.~Caskurlu, A.~Gehani, K.~Subramani, Analytical models for risk-based intrusion response, Computer Networks (Special issue on Security/Identity Architecture) 57(10), pp. 2181--2192, 2013.

\bibitem{Bla03} M. Bl{\"a}ser, Computing small partial coverings, Inf. Process. Lett. 85(6), pp. 327--331, 2003.

\bibitem{BshB98} N.~H. Bshouty, L. Burroughs, Massaging a linear programming solution to give a $2$-approximation for a generalization of the vertex cover problem, In Proceedings of STACS, pp. 298--308, 1998.

\bibitem{pvcbpaper}
B.~Caskurlu, V. Mkrtchyan, O. Parekh, K.~Subramani, Partial Vertex Cover and Budgeted Maximum Coverage in Bipartite Graphs, SIAM J. Disc. Math., 2017, to appear.

\bibitem{CS14} 
B. Caskurlu, V. Mkrtchyan, O. Parekh, K.~Subramani, On Partial Vertex Cover and Budgeted Maximum Coverage Problems in Bipartite Graphs, IFIP TCS, pp. 13--26, 2014.

\bibitem{ParamBook15} M. Cygan, F. V. Fomin, L. Kowalik, D. Lokshtanov, D. Marx, M. Pilipczuk, M. Pilipczuk, S. Saurabh, Parameterized Algorithms, Springer 2015, ISBN 978-3-319-21274-6, pp. 3--555.

\bibitem{DS05}
I. Dinur, S. Safra, On the hardness of approximating minimum vertex cover, Ann. of Math. 162(1), pp. 439--485, 2005.


\bibitem{Hoch98}
D.~S. Hochbaum, The $t$-vertex cover problem: Extending the half integrality framework with budget constraints, In Proceedings of APPROX, pp. 111--122, 1998.

\bibitem{Joret}
G. Joret, A. Vetta, Reducing the Rank of a Matroid, Disc. Math. and Theor. Comp. Sci. 17(2), pp. 143--156, (2015).

\bibitem{Kar09}
G. Karakostas, A better approximation ratio for the vertex cover problem,  ACM Transactions on Algorithms 5(4), 2009.

\bibitem{Kar72}
R.~Karp, Reducibility among combinatorial problems, In R.~Miller and J.~Thatcher, editors, Complexity of Computer Computations, pp. 85--103, Plenum Press, 1972.

\bibitem{KR08}
S. Khot, O. Regev, Vertex cover might be hard to approximate to within $2-\epsilon$, J. Comput. Syst. Sci. (74), pp. 335--349, May 2008.

\bibitem{GKS04}
S.~Khuller, R.~Gandhi, A. Srinivasan, Approximation algorithms for partial covering problems, J. Algorithms 53(1), pp. 55--84, 2004.

\bibitem{KLR08}
A.~Langer, J.~Kneis, P. Rossmanith, Improved upper bounds for partial vertex cover, In WG, pp. 240--251, 2008.

\bibitem{M09}
J. Mestre, A primal-dual approximation algorithm for partial vertex cover: Making educated guesses, Algorithmica 55(1), pp. 227--239, 2009.

\bibitem{BFMR10}
J. ~Mestre, R. Bar-Yehuda, G. ~Flysher, D. Rawitz, Approximation of partial capacitated vertex cover, Lecture Notes in Computer Science (4698), pp. 335--346, 2007.

\bibitem{pvct}
V. Mkrtchyan, O. Parekh, D. Segev, K. Subramani, The approximability of Partial vertex covers in trees, Proceedings of the $43^{rd}$ International Conference on Current Trends in Theory and Practice of Computer Science (SOFSEM), Limerick, Ireland, pp. 350--360, 2017.

\bibitem{KMR07}
D.~M{\"o}lle, J.~Kneis, P. Rossmanith, Partial vs. complete domination: $t$-dominating set, In SOFSEM (1), pp. 367--376, 2007.

\bibitem{BMCPsets}
A.~Moss, S. ~Khuler, J.~(Seffi) Naor, The budgeted maximum coverage problem, Inform. Process. Lett. 70(1), pp. 39--45, 1999.

\bibitem{Guo05}
R.~Niedermeier, J. ~Guo, S. Wernicke, Parameterized complexity of generalized vertex cover problems, Lecture Notes in Computer Science (3608), pp. 36--48, 2005.

\bibitem{PY91}
Ch.~H. Papadimitriou, M. Yannakakis, Optimization, approximation, and complexity classes, J. Comput. System Sci. 43(3), pp. 425--440, 1991.

\bibitem{KPS11}
O.~Parekh, J.~K{\"o}nemann, D. Segev, A unified approach to approximating partial covering problems, Algorithmica 59(4), pp. 489--509, 2011.

\bibitem{KMRR06}
S.~Richter, J.~Kneis, D.~M{\"o}lle, P. Rossmanith, Intuitive algorithms and $t$-vertex cover, In ISAAC, pp. 598--607, 2006.

\bibitem{Vazirani}
V.~V. Vazirani, Approximation Algorithms, Springer-Verlag New York, Inc., New York, NY, USA, 2001.

\end {thebibliography}

\end{document}